\theoremstyle{definition}
\theoremstyle{definition}
\newtheorem{prop}{Proposition}
\theoremstyle{definition}
\newtheorem{lemma}{Lemma}
\newcommand{\Hd}[1]{H_{#1, \mathrm{d}}}
\newcommand{\Hr}[1]{H_{#1, \mathrm{r}}}
\newcommand*{\la}{\big \lvert}%
\newcommand*{\ra}{\big \rvert}%
\definecolor{blueD}{RGB}{0, 0, 175}
\newenvironment{reply}{}{}
\newcommand{\conffont}{\fontsize{7.5pt}{7.5pt}\selectfont}
\newcommand{\legendfont}{\fontsize{7.5pt}{7.5pt}\selectfont}
\newcommand{\legendfontt}{\fontsize{9pt}{9pt}\selectfont}
\newcommand{\tickfont}{\fontsize{10pt}{10pt}\selectfont}
\newcommand{\labelfont}{\fontsize{11pt}{11pt}\selectfont}
\definecolor{c1}{RGB}{190, 160, 255}
\definecolor{c2}{RGB}{89,51,84}
\definecolor{cgrid}{RGB}{215,215,215}
\definecolor{NOMA}{RGB}{9, 148, 99}
\definecolor{NOMAIRS1}{RGB}{217, 22, 74}
\definecolor{NOMAIRS2}{RGB}{22, 80, 217}
\definecolor{cgrid2}{RGB}{215,215,215}
\definecolor{dark01}{RGB}{50, 50, 50}
\definecolor{greenC1}{RGB}{13, 128, 82}
\definecolor{redC1}{RGB}{209, 6, 60}
\definecolor{blueC1}{RGB}{6, 67, 209}
\definecolor{c3}{RGB}{254, 254, 254}
\tikzset{
	gridc/.style= {dotted, cgrid2},
	linew/.style= {line width=0.65pt},
	linew2/.style= {line width=0.75pt},
	marksz/.style= {mark options={scale=1, fill=white, fill opacity=0.5, solid, line width=0.5}},
	marksz2/.style= {mark options={scale=1.4, fill=white, fill opacity=0.5, solid, rotate=180, line width=0.5}},
	marksz3/.style= {mark options={scale=1, fill=white, fill opacity=0.5, solid, line width=0.5}},
	errormarkSty/.style = {rotate=90, mark size=3pt, solid, line width=0.5pt},
	invisible/.style={opacity=0},
	visible on/.style={alt={#1{}{invisible}}},
	alt/.code args={<#1>#2#3}{%
		\alt<#1>{\pgfkeysalso{#2}}{\pgfkeysalso{#3}} % \pgfkeysalso doesn't change the path
	},
}
\newcommand{\conf}[3]{
	\node[circle, draw=black, name=conf1, anchor=#1, inner sep=1.25pt, minimum size=0pt] at (rel axis cs: #2, #3) {};
	\node[name=conf2, anchor=center, inner sep=3pt, minimum size=0pt] at (conf1) {};
	\node[fill=c3, anchor=west, inner sep=4pt, minimum width=60pt, minimum height=9pt, xshift=-1.5pt, rounded corners=1pt] at (conf2.west) {};
	\draw [draw=dark01](conf2.south) -- (conf2.north); \draw [line width=0.5pt, draw=dark01](conf2.north west) -- (conf2.north east); \draw [line width=0.5pt, draw=dark01](conf2.south west) -- (conf2.south east); \node[right=-2pt of conf2, anchor=west] {\conffont \,$\textrm{95\%}$ confidence}; \node[circle, draw=dark01, anchor=center, inner sep=1.25pt, minimum size=0pt, fill=white, fill opacity=0.5, solid, line width=0.5pt] at (conf1) {};
}
\pgfplotsset{
%	axisSetup/.style= {},
	axisSetup/.style= {axis x line=bottom, axis y line=left, tick align=inside, axis line style={-, line width=1.25pt, color=dark01}},
	short Legend1/.style={%
		legend image code/.code={
			\draw[##1,linew] plot coordinates {(0pt,0pt) (9pt,0pt)};
			\draw[##1,linew, dashed] plot coordinates {(0pt,2pt) (9pt,2pt)};
			\draw[##1,mark=square, marksz, linew] plot coordinates {(-4pt,1pt)};
		}
	},
	short Legend2/.style={%
		legend image code/.code={
			\draw[##1,linew] plot coordinates {(0pt,0pt) (9pt,0pt)};
			\draw[##1,linew, dashed] plot coordinates {(0pt,2pt) (9pt,2pt)};
			\draw[##1,mark=o, marksz, linew] plot coordinates {(-4pt,1pt)};
		}
	},
	short Legend3/.style={%
		legend image code/.code={
			\draw[##1,linew] plot coordinates {(0pt,0pt) (9pt,0pt)};
			\draw[##1,linew, dashed] plot coordinates {(0pt,2pt) (9pt,2pt)};
			\draw[##1,mark=triangle, linew, marksz2] plot coordinates {(-4pt,1pt)};
		}
	},
	ylabel right/.style={
		after end axis/.append code={
			\node [rotate=90, anchor=north] at (rel axis cs:1,0.5) {#1};
		}   
	}
}
\begin{document}

\author{Bashar Tahir, Stefan Schwarz, and Markus Rupp \\
	
	\thanks{The authors are with the Institute of Telecommunications, TU Wien, Austria (email: bashar.tahir@tuwien.ac.at). Bashar Tahir and Stefan Schwarz are with the Christian Doppler Laboratory for Dependable Wireless Connectivity for the Society in Motion. The financial support by the Austrian Federal Ministry for Digital and Economic Affairs and the National Foundation for Research, Technology and Development is gratefully acknowledged.}

%	Institute of Telecommunications, Technische Universit\"{a}t Wien, Vienna, Austria \\
%	Email: \{bashar.tahir, stefan.schwarz, markus.rupp\}@tuwien.ac.at
}

\title{Analysis of Uplink IRS-Assisted NOMA under Nakagami-\textit{m} Fading via Moments Matching}

\maketitle
\begin{abstract}
This letter investigates the uplink outage performance of \ac{IRS}-assisted \ac{NOMA}. We consider the general case where all users have both direct and reflection links, and all links undergo Nakagami-$m$ fading. We approximate the received powers of the \ac{NOMA} users as Gamma random variables via moments matching. This allows for tractable expressions of the outage under \ac{IC}, while being flexible in modeling various propagation environments. 
%We consider two strategies in which the \ac{IRS} is configured to boost either the first user, or the second one.
 Our analysis shows that under certain conditions, the presence of an \ac{IRS} might degrade the performance of users that have dominant \ac{LOS} to the \ac{BS}, while users dominated by \ac{NLOS} will always benefit from it.
\end{abstract}

%\begin{IEEEkeywords}
%	\ac{IRS}, \ac{NOMA}, Gamma moments matching, outage analysis, interference cancellation. 
%\end{IEEEkeywords}

\IEEEpeerreviewmaketitle

\section{Introduction}
For \ac{B5G} wireless networks, \acfp{IRS} have been identified as a key technology to enhance the spectral- and energy-efficiency at low-cost \cite{Renzo19, Wu19}. Consisting of a large number of reconfigurable nearly-passive elements, those surfaces can alter the propagation of the incident waves to improve the wireless transmission, e.g., boosting the received power, suppressing interference, etc \cite{Wu20}. The combination of \acp{IRS} with \acf{NOMA} has gained interest recently \cite{Fang20, Fu19, Ding20a, desena2020role, Cheng20}. \ac{NOMA} allows multiple users to share the same time-frequency resources, which results in higher spectral-efficiency, lower latency, and/or improved fairness \cite{Ding17}.

Existing works on \ac{IRS}-assisted \ac{NOMA} transmission show promising gains in terms of the outage performance and sum-rate, e.g., \cite{Fang20, Fu19, Ding20a, desena2020role, Cheng20}. However, most of those works consider propagation under certain conditions, such as the weak \ac{NOMA} \ac{UE} being connected to the \acf{BS} only via the \ac{IRS} (no direct link), while the strong \ac{UE} is only served by the direct link to the \ac{BS}, with no contribution from the \ac{IRS}. Another common assumption is Rayleigh fading, which is not a practical model for such systems, since the \ac{IRS} could be deployed at buildings with strong \acf{LOS} to the serving \ac{BS} \cite{Wu19}. Also, in the context of \ac{NOMA} user-pairing, the strong \ac{NOMA} \ac{UE} might have a good \ac{LOS} to the \ac{BS}, and possibly to the surface as well.

In this work, we consider a general two-\ac{UE} \ac{IRS}-\ac{NOMA} uplink in which both \acp{UE} have direct and reflection links to the \ac{BS}, and all the links undergo Nakagami-$m$ fading. \begin{reply}By adjusting the $m$ parameter, we can easily switch between various \ac{LOS} and \acf{NLOS} propagation conditions \cite{Atzeni18}\end{reply}. In order to obtain tractable expressions for the outage under \ac{NOMA} \acf{IC}, the received powers of the \ac{NOMA} \acp{UE} are approximated as Gamma \acp{RV} via moments matching.
Note that the Gamma power approximation has been applied before, e.g., in \cite{Lyu20} to model the received power of \acp{IRS} for an \ac{OMA} setting under Rayleigh fading.
 We consider two strategies in which the \ac{IRS} is either configured to boost the first \ac{UE}, or the second one, and characterize the corresponding channel statistics and outage probability under \ac{IC}. We apply our analysis to an example scenario, and show that under certain conditions, the presence of the \ac{IRS} might degrade the outage performance of the \acp{UE} with dominant \ac{LOS} to the \ac{BS}. 
%On the other hand, \acp{UE} dominated by \ac{NLOS} propagation will always benefit from its presence.
\section{System Model}
\begin{reply}We consider a \ac{NOMA} uplink with two single-antenna \acp{UE}, assisted by an $N$-elements \ac{IRS}. At the \ac{BS} (also single-antenna), the overall received signal from both the direct and reflection links is given by\end{reply}
\begin{align}\label{eq:1}
r = \sum_{i = 1}^{2} \Big(\sqrt{\ell_{h_i}} h_i + \sqrt{\ell_{\mathrm{BS}} \ell_{g_i}} \mathbf{h}_{\mathrm{BS}}^T\mathbf{\Phi}\,\mathbf{g}_i \Big)\sqrt{P_i}\,x_i + w~,
\end{align}
where $h_i \in \mathbb{C}$, $\mathbf{h}_{\mathrm{BS}} \in \mathbb{C}^{N}$, and $\mathbf{g}_i \in \mathbb{C}^{N}$ are the small-scale fading coefficients of the \ac{UE}-\ac{BS}, \ac{BS}-\ac{IRS}, \ac{UE}-\ac{IRS} links, respectively. The parameters $\ell_{h_i}$, $\ell_{\mathrm{BS}}$, and $\ell_{g_i}$ are the corresponding pathlosses, $P_i$ and $x_i$ are the transmit power and signal of the $i^{\mathrm{th}}$-\ac{UE}, and $w$ is the zero-mean Gaussian noise with power $P_w$. The phase-shift matrix $\mathbf{\Phi} \in \mathbb{C}^{N \times N}$ is defined as $\mathbf{\Phi} = \mathrm{diag} \big(e^{j\phi_1}, e^{j\phi_2}, \dots, e^{j\phi_N}\big),$ where $\phi_n$ is the phase-shift applied at the $n^{\textrm{th}}$-element of the \ac{IRS}. Note that the \ac{IRS} term can be written equivalently as
\begin{align}\label{eq:3}
	\mathbf{h}_{\mathrm{BS}}^T\mathbf{\Phi}\,\mathbf{g}_i = \sum_{n = 1}^{N} e^{j\phi_n}  \mathbf{h}_{\mathrm{BS}, n}  \,  \mathbf{g}_{i, n}~,
\end{align}
where $\mathbf{h}_{\mathrm{BS}, n}$ and $\mathbf{g}_{i, n}$ are the $n^{\textrm{th}}$-elements of $\mathbf{h}_{\mathrm{BS}}$ and  $\mathbf{g}_{i}$, respectively. The links are assumed to undergo Nakagami-$m$ fading, i.e., $|h_i| \sim \mathrm{Nakagami} (m_{h_i},\,1)$, $|\mathbf{h}_{\mathrm{BS}, n}| \sim \mathrm{Nakagami} (m_{\mathrm{BS}},\,1)$, and $|\mathbf{g}_{i, n}| \sim \mathrm{Nakagami} (m_{g_i},\,1)$,
%\begin{align}\label{eq:4}
%	\begin{split}
%		|h_i| \sim \mathrm{Nakagami} (m_{h_i}, 1)~, \\
%		|\mathbf{h}_{\mathrm{BS}, n}| \sim \mathrm{Nakagami} (m_{\mathrm{BS}}, 1)~, \\
%		|\mathbf{g}_{i, n}| \sim \mathrm{Nakagami} (m_{g_i}, 1)~,
%	\end{split}
%\end{align}  
where $m_{h_i}$, $m_{\mathrm{BS}}$, and $m_{g_i}$ are the corresponding distribution parameters. On top of being a general fading distribution, Nakagami-$m$ has a Gamma  distributed power, and therefore some of the results we obtain below become exact under certain conditions.  

We consider the case where the \ac{IRS} is configured to boost the received power of either of the \acp{UE}. To maximize the receive power of the $i^{\textrm{th}}$-UE, the phase-shifts are set to $\phi_n = \arg{\big(h_i\big)} - \arg{\big(\mathbf{h}_{\mathrm{BS}, n} \, \mathbf{g}_{i, n}\big)},$
which can be shown by a simple application of the triangular inequality on the received amplitude.
Since the Gamma moments matching is used frequently in this work, we state how it is performed in the following lemma.
%\begin{lemma}[Gamma \ac{RV} second-order moments matching]\label{lemma:1}
\begin{lemma}\label{lemma:1}
	Let $X$ be a non-negative \ac{RV} with first and second moments given by $\mu_X = \mathbb{E}\{X\}$ and $\mu_X^{(2)} = \mathbb{E}\{X^2\}$, respectively. The Gamma \ac{RV} $Y \sim \Gamma(k, \theta)$ with the same first and second moments has shape $k$ and scale $\theta$ parameters
	\begin{align*}
	k = \frac{\mu_X^2}{\mu_X^{(2)} - \mu_X^2}~, \quad \quad \theta = \frac{\mu_X^{(2)} - \mu_X^2}{\mu_X}~.
	\end{align*}		
\end{lemma}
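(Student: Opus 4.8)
The plan is to exploit the fact that a Gamma distribution is completely determined by its two parameters, so matching the first two moments amounts to solving two equations in two unknowns. First I would record the closed-form moments of $Y \sim \Gamma(k,\theta)$. Starting from the density $f_Y(y) = y^{k-1}e^{-y/\theta}/(\Gamma(k)\theta^k)$ for $y \geq 0$, the substitution $u = y/\theta$ in $\int_0^\infty y^{k+r-1}e^{-y/\theta}\,dy = \Gamma(k+r)\theta^{k+r}$ gives the raw moments $\mathbb{E}\{Y^r\} = \theta^r\,\Gamma(k+r)/\Gamma(k)$. Using the recursion $\Gamma(k+1) = k\Gamma(k)$, this yields the mean $\mathbb{E}\{Y\} = k\theta$ and the second moment $\mathbb{E}\{Y^2\} = k(k+1)\theta^2 = k\theta^2 + k^2\theta^2$, equivalently the variance $\mathrm{Var}(Y) = k\theta^2$.

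Next I would impose the two matching conditions $k\theta = \mu_X$ and $k\theta^2 + k^2\theta^2 = \mu_X^{(2)}$. Subtracting the square of the first equation from the second isolates the variance term, $k\theta^2 = \mu_X^{(2)} - \mu_X^2$. Dividing this relation by $k\theta = \mu_X$ cancels $k$ and immediately produces $\theta = (\mu_X^{(2)} - \mu_X^2)/\mu_X$. Substituting back through $k = \mu_X/\theta$ then gives $k = \mu_X^2/(\mu_X^{(2)} - \mu_X^2)$, which are exactly the claimed expressions. The algebra is entirely routine, so I would present it compactly.

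There is no genuine analytical obstacle here; the only point deserving a word of care is the admissibility of the solution, namely that the recovered $k$ and $\theta$ are both strictly positive so that $\Gamma(k,\theta)$ is a bona fide distribution. Since $\mu_X > 0$ for any non-degenerate non-negative $X$, this reduces to verifying $\mu_X^{(2)} - \mu_X^2 > 0$, which is precisely $\mathrm{Var}(X) > 0$ and follows from the Cauchy--Schwarz (Jensen) inequality, with equality only when $X$ is almost surely constant. Hence the matched parameters are well-defined whenever $X$ has strictly positive variance, completing the argument.
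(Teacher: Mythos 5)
Your proof is correct: the moment computation $\mathbb{E}\{Y\} = k\theta$, $\mathbb{E}\{Y^2\} = k(k+1)\theta^2$ followed by solving the two matching equations yields exactly the stated $k$ and $\theta$. The paper itself gives no derivation at all---it simply cites a statistics textbook---so your argument is precisely the standard one that reference would contain, and your added remark that well-definedness requires $\mathrm{Var}(X) = \mu_X^{(2)} - \mu_X^2 > 0$ (i.e., $X$ not almost surely constant) is a worthwhile precision the paper omits.
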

\begin{proof}
\begin{reply}It can be found in statistics books, such as \cite{Florescu14}.\end{reply}
\end{proof}
Additionally, Gamma \acp{RV} satisfy the scaling property, in the sense that if $Y \sim \Gamma(k, \theta)$, then $cY \sim \Gamma(k, c\theta)$. 

\section{Statistics of the Received Power}
Our goal is to obtain expressions that describe the outage probability of the $i^{\textrm{th}}$-\ac{UE}. In the presence of the interference from the other $j^{\textrm{th}}$-\ac{UE}, the \ac{SINR} outage is defined as
\begin{align}\label{eq:15}
p_{\mathrm{out}}^{(i)} = \mathbb{P}\bigg\{ \frac{Z_i P_i}{Z_j P_j + P_w } \leq \epsilon \bigg\} ~,
\end{align}
where $Z_i$ and $Z_j$, as defined below in \eqref{eq:6} and \eqref{eq:12}, are the effective channel powers of the \acp{UE}, and $\epsilon$ is the outage threshold. If the interference is removed via \ac{IC}, then the outage is defined for the \ac{SNR} as
\begin{align}\label{eq:16}
p_{\mathrm{out,\,SNR}}^{(i)} = \mathbb{P}\bigg\{ \frac{Z_i P_i}{P_w } \leq \epsilon \bigg\} ~,
\end{align}
which is simply the \ac{CDF} of $Z_i$ evaluated at $\epsilon P_w / P_i$. In order to evaluate those probabilities, an access to the distributions of $Z_i$ and $Z_j$ is required, which are difficult to characterize, let alone obtaining exact closed-form expressions from them. For that reason, we resort to approximating the received powers as Gamma \acp{RV} via moment matching. On the one hand, the Gamma distribution encompasses many power distributions as special cases, and on the other hand, it allows for tractability when evaluating the outage. To do so, we need access to the moments of $Z_i$ and $Z_j$, for which we first need to characterize their statistics.

Next, and without loss of generality, we assume that the \ac{IRS} is configured to boost \ac{UE}1. In this case, the signal of \ac{UE}1 will be coherently combined, while for \ac{UE}2, and assuming the channels of the two \ac{UE}s are uncorrelated, the combining will be fully random. The other case is simply obtained by a switch of indices. 
%On top of being able to model various propagation conditions, the Gamma distribution also allows for compact expressions when evaluating metrics such as the link \ac{SINR} outage. This is actually one of the reasons why we aimed for having Gamma distributed powers. 
%
%Note that Gamma power approximation has been applied before, e.g., in \cite{Lyu20} to model the receive power of \acp{IRS} for an \ac{OMA} setting under Rayleigh fading.
%% and also in previous works, such as \cite{Heath13}
%
%The \ac{SINR} outage of the $i^{\textrm{th}}$-\ac{UE} is defined as
%\begin{align}\label{eq:15}
%p_{\mathrm{out}}^{(i)} = \mathbb{P}\bigg\{ \frac{Z_i P_i}{Z_j P_j + P_w } \leq \epsilon \bigg\} ~,
%\end{align}
%where $P_w$ is the noise power and $\epsilon$ is the outage threshold. If the interference is removed via \ac{IC}, then the outage is defined for the \ac{SNR} as
%\begin{align}\label{eq:16}
%p_{\mathrm{out, SNR}}^{(i)} = \mathbb{P}\bigg\{ \frac{Z_i P_i}{P_w } \leq \epsilon \bigg\} ~,
%\end{align}
%which is the Gamma \ac{CDF} of $Z_i$ evaluated at $\epsilon P_w / P_i$.

%\vspace{-6mm}
\subsection{Statistics of the Coherently Combined UE}
Since the \ac{IRS} is configured for \ac{UE}1, its signal will be coherently combined, and therefore its effective channel power is given by
\begin{align}\label{eq:6}
Z_1&= \Big(\sqrt{\ell_{h_1}}  \la h_1 \ra + \sqrt{\ell_{\mathrm{BS}} \ell_{g_1}} \sum_{n = 1}^{N} \la \mathbf{h}_{\mathrm{BS}, n} \ra \, \la \mathbf{g}_{1, n} \ra \Big)^2~.
\end{align}
Due to the coherent combining, all the fading terms are positive in-phase aligned, and the second term constitutes a sum of identical unit-power double-Nakagami \acp{RV}. By the causal form of the \ac{CLT} \cite{Papoulis62}, we can approximate the sum of positive \acp{RV} by a Gamma \ac{RV}. This is given by the following lemma (we found similar approximation for Rayleigh fading in \cite{Atapattu20}).

\begin{lemma}\label{prop:1}
	Let $S_1 = \sum_{n = 1}^{N} \la \mathbf{h}_{\mathrm{BS}, n} \ra \, \la \mathbf{g}_{1, n} \ra$, then the distribution of $S_1$ can be approximated as
	\begin{align*}
	S_1 \stackrel{\mathrm{approx}}{\sim}  \Gamma\Bigg(N\frac{\mu_1^2}{1 - \mu_1^2}\,, \, \frac{1 - \mu_1^2}{\mu_1} \Bigg)~,
	\end{align*}
\end{lemma}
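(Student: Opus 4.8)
The plan is to apply the Gamma moments matching of Lemma~\ref{lemma:1} directly to $S_1$, so that the whole task reduces to computing the first two moments of the sum. First I would note that the summands $T_n = \la \mathbf{h}_{\mathrm{BS}, n} \ra \, \la \mathbf{g}_{1, n} \ra$ are independent and identically distributed (i.i.d.) across $n$, since the fading coefficients are i.i.d.\ over the \ac{IRS} elements. The causal form of the \ac{CLT} invoked above then justifies approximating the sum of these non-negative terms by a Gamma \ac{RV}, and Lemma~\ref{lemma:1} fixes its two parameters once the moments are known; everything else is bookkeeping.

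Next I would compute the per-term moments, exploiting the independence of $\la \mathbf{h}_{\mathrm{BS}, n} \ra$ and $\la \mathbf{g}_{1, n} \ra$. The first moment factorizes, $\mu_1 \coloneqq \mathbb{E}\{T_n\} = \mathbb{E}\{\la \mathbf{h}_{\mathrm{BS}, n} \ra\}\,\mathbb{E}\{\la \mathbf{g}_{1, n} \ra\}$, each factor being the mean of a unit-power Nakagami-$m$ magnitude, namely $\Gamma(m+\tfrac{1}{2})/(\Gamma(m)\sqrt{m})$ evaluated at the respective $m$. The key simplification is the second moment: because both magnitudes have unit power, $\mathbb{E}\{T_n^2\} = \mathbb{E}\{\la \mathbf{h}_{\mathrm{BS}, n} \ra^2\}\,\mathbb{E}\{\la \mathbf{g}_{1, n} \ra^2\} = 1$. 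Hence each summand has variance $1 - \mu_1^2$, and I never need the closed form of $\mu_1$ at all.

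Because the $T_n$ are i.i.d., the sum moments follow at once: $\mathbb{E}\{S_1\} = N\mu_1$ and $\mathrm{Var}(S_1) = N(1 - \mu_1^2)$, so that $\mathbb{E}\{S_1^2\} = N(1-\mu_1^2) + N^2\mu_1^2$. Substituting $\mu_X = N\mu_1$ and $\mu_X^{(2)} - \mu_X^2 = N(1-\mu_1^2)$ into Lemma~\ref{lemma:1} gives $k = N^2\mu_1^2 / [N(1-\mu_1^2)] = N\mu_1^2/(1-\mu_1^2)$ and $\theta = N(1-\mu_1^2)/(N\mu_1) = (1-\mu_1^2)/\mu_1$, which is exactly the claimed parameterization.

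The only genuinely non-routine point is the modeling choice rather than a computation: the statement is an \emph{approximation}, so the real content is the justification for replacing the sum by a Gamma law in the first place. I would lean on the causal form of the \ac{CLT} for sums of positive \acp{RV}, observing that a Gamma target (as opposed to a Gaussian) respects the non-negativity of $S_1$ and is increasingly accurate as $N$ grows. After that, the derivation is just the two-moment accounting above, with the convenient accident that the unit-power normalization forces the per-term second moment to be exactly one.
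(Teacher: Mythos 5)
Your proposal is correct and follows essentially the same route as the paper: invoke the causal form of the \ac{CLT} to justify a Gamma fit, compute $\mathbb{E}\{S_1\} = N\mu_1$ and $\mathrm{Var}(S_1) = N(1-\mu_1^2)$ using independence and the unit-power normalization (so each summand's second moment equals one), and substitute into \Cref{lemma:1}. You even spell out the per-term variance computation and the final parameter substitution that the paper leaves implicit, so there is nothing to add.
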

with 
\begin{align*}
	\mu_1 =  \frac{\Gamma(m_{\mathrm{BS}} + \frac{1}{2}) \Gamma(m_{g_1} + \frac{1}{2})}{\Gamma(m_{\mathrm{BS}})\Gamma(m_{g_1}) (m_{\mathrm{BS}}\, m_{g_1})^{1/2}}~,
\end{align*}
where $\Gamma(.)$ is the Gamma function.
\begin{proof}
	\begin{reply}Assuming $N$ is large enough, we apply the causal form of the \ac{CLT} and approximate the sum by a Gamma \ac{RV} via \Cref{lemma:1}\end{reply}. For that, we need the first and second moments of the sum. Note that the denominator of $k$ and the numerator of $\theta$ in \Cref{lemma:1} are the variance, which is easier to calculate here. The mean and variance of the sum under unit-power i.i.d. conditions are given by
	\begin{align*}
		\mu_{S_1}^{\vphantom{(2)}} &= \sum_{n = 1}^{N} \mathbb{E}\big\{| \mathbf{h}_{\mathrm{BS}, n} | \, | \mathbf{g}_{1, n} |\big\} = N \mu_1~, \\
		\mu_{S_1}^{(2)} - \mu_{S_1}^2 &= \sum_{n = 1}^{N} \mathrm{Var}\big\{| \mathbf{h}_{\mathrm{BS}, n} | \, | \mathbf{g}_{1, n} |\big\} = N (1 - \mu_1^2)~,
	\end{align*}
	with
	\begin{align*}
		\mu_1 = \mathbb{E}\big\{| \mathbf{h}_{\mathrm{BS}, n} | \, | \mathbf{g}_{1, n} |\big\} = \mathbb{E}\big\{| \mathbf{h}_{\mathrm{BS}, n} |\big\}\mathbb{E}\big\{| \mathbf{g}_{1, n} |\big\}
	\end{align*}
	being the product of the mean of two independent Nakagami \acp{RV}. Substituting the values, we arrive at the final result.
\end{proof}
\begin{reply}The quality of this Gamma approximation improves with the number of IRS elements, as known from the \ac{CLT}. To get a feeling for the approximation, \Cref{fig:f1} shows the density in $\log$-scale for $N = 4$\end{reply}. As can been seen, the approximation holds very well even in the case of only four elements.

\begin{figure}[t]
	\centering
	\resizebox{0.855\linewidth}{!}{%
		\pgfplotsset{width=240pt, height=120, compat = 1.9}
		\begin{tikzpicture}
	\begin{semilogyaxis}[
	xlabel={$S_1$},
	ylabel={Density},
	label style={font=\labelfont},
	ylabel shift = -1mm,	
	ymin=1e-6, ymax=1,
	xmin=0, xmax=10,
	xtick={0,1,2,3,4,5,6,7,8,9,10},
	ytick={1e-6,1e-4,1e-2,1e0},		
	ticklabel style = {font=\tickfont},
	ymajorgrids=true,
	xmajorgrids=true,
%	yminorgrids=true,
%	xminorgrids=true,
	major x grid style={solid, cgrid},
	major y grid style={solid, cgrid},
%	minor x grid style={solid, cgrid},
%	minor y grid style={solid, cgrid},
	legend style={font=\legendfontt, at={(0.5,0.05)},anchor=south},
	legend cell align=left,
	]
	\addplot+[color=c1, mark=none, line width=2pt] table [x=x1, y=y1, col sep=comma] {graphics/results/Fig1.csv};
	\addlegendentry{Empirical}
	\addplot+[color=c2, mark=none, line width=1.25pt, dashed, dash phase=0, dash pattern=on 5pt off
	5pt] table [x=x2, y=y2, col sep=comma] {graphics/results/Fig1.csv};
	\addlegendentry{Gamma approx.}
	\end{semilogyaxis}
\end{tikzpicture}
	}
	\vspace{-1mm}
	\caption{Density of $S_1$ for $N = 4$, $m_{\mathrm{BS}} = 3$, and $m_{g_1} = 1$.}
	\label{fig:f1}
	\vspace{-2mm}
\end{figure}
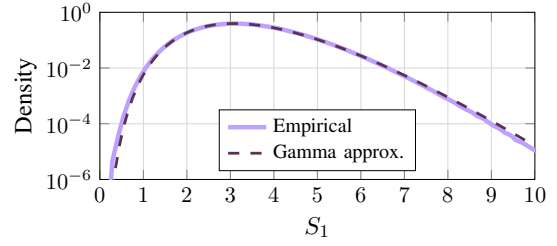

Let $\Hd1 = \sqrt{\ell_{h_1}}  \la h_1 \ra$ and $\Hr1 =\sqrt{\ell_{\mathrm{BS}} \ell_{g_1}}  S_1$ be the pathloss-scaled fading coefficients of the direct and reflection links, the channel power of \ac{UE}1 can now be written as
\begin{align}\label{eq:9}
Z_1 &= (\Hd1 + \Hr1)^2~.
\end{align}
Using the scaling properties, the terms are distributed as
\begin{align}
\Hd1 &\sim \mathrm{Nakagami}\Big(m_{h_1}, \ell_{h_1}\Big)~, \\ 
\Hr1 &\sim \Gamma\Big(Nk_{S_1}, \sqrt{\ell_{\mathrm{BS}} \ell_{g_1}} \theta_{S_1} \Big)~,
\end{align}
where $k_{S_1} = \mu_1^2 / (1 - \mu_1^2)$ and $\theta_{S_1} = (1 - \mu_1^2) / \mu_1$.
Finally, we approximate $Z_1$ by our originally intended Gamma \ac{RV}. Again, we need access to the first and second moments. Those are given under independence by the following lemma.
\begin{lemma}\label{lemma:2}
%The first two moments of \ac{UE}1 receive power (coherent combining) are given by
%\begin{align*}
%\mathbb{E}\{(H_1 + S_1 )^2\} &= \mathbb{E}\{H_1^2\} + \mathbb{E}\{S_1^2\} + 2\mathbb{E}\{H_1\}\mathbb{E}\{S_1\}~, \\
%\mathbb{E}\{(H_1 + S_1 )^4\} &= \mathbb{E}\{H_1^4\} + \mathbb{E}\{S_1^4\} +  6\,\mathbb{E}\{H_1^2\}\mathbb{E}\{S_1^2\}\\ & ~~~ + 4\,\mathbb{E}\{H_1^3\}\mathbb{E}\{S_1\} + 4\,\mathbb{E}\{H_1\}\mathbb{E}\{S_1^3\}~.
%\end{align*}
%where 
%\begin{align*}
%\mathbb{E}\{H_1^d\} &= \frac{\Gamma(m_{h_1} + \frac{d}{2})}{\Gamma(m_{h_1}) (m_{h_1} / \ell_{h_1})^{d/2}}~, \\  
%\mathbb{E}\{S_1^d\} &= \frac{\Gamma(k_s + d) (\sqrt{\ell_{\mathrm{BS}} \ell_{g_1} \vphantom{p_{h_1}} } \theta_s)^d }{\Gamma(k_s)}~.
%\end{align*}

The first two moments of \ac{UE}1 channel power (under coherent combining) are given by
\begin{align*}
\mu_{Z_1}^{\vphantom{(2)}} &= \mu_{\Hd1}^{(2)} + \mu_{\Hr1}^{(2)} + 2\mu_{\Hd1}^{\vphantom{(2)}}\,\mu_{\Hr1}^{\vphantom{(2)}}~, \\
\mu_{Z_1}^{(2)} &= \mu_{\Hd1}^{(4)} + \mu_{\Hr1}^{(4)} +  6\,\mu_{\Hd1}^{(2)}\,\mu_{\Hr1}^{(2)} \\ & \quad\quad+ 4\,\mu_{\Hd1}^{(3)}\,\mu_{\Hr1}^{\vphantom{(2)}} + 4\,\mu_{\Hd1}^{\vphantom{(2)}}\,\mu_{\Hr1}^{(3)}~,
\end{align*}
where 
\begin{align*}
\mu_{\Hd1}^{(p)}  &= \frac{\Gamma(m_{h_1} + \frac{p}{2})}{\Gamma(m_{h_1}) (m_{h_1} / \ell_{h_1})^{p/2}}~, \\  
\mu_{\Hr1}^{(p)}  &= \frac{\Gamma(Nk_{S_1}  + p) (\sqrt{\ell_{\mathrm{BS}} \ell_{g_1} \vphantom{p_{h_1}} } \theta_{S_1} )^p }{\Gamma(Nk_{S_1} )}~.
\end{align*}
\end{lemma}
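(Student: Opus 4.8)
The plan is to expand $Z_1 = (\Hd1 + \Hr1)^2$ by the binomial theorem and to exploit the independence of the direct-link and reflection-link coefficients so that every joint expectation factors into a product of marginal moments. For the first moment I would write $(\Hd1 + \Hr1)^2 = \Hd1^2 + 2\,\Hd1\Hr1 + \Hr1^2$, take expectations term by term, and use $\mathbb{E}\{\Hd1\Hr1\} = \mathbb{E}\{\Hd1\}\,\mathbb{E}\{\Hr1\}$ to obtain $\mu_{Z_1} = \mu_{\Hd1}^{(2)} + 2\,\mu_{\Hd1}\mu_{\Hr1} + \mu_{\Hr1}^{(2)}$. For the second moment I would expand $(\Hd1 + \Hr1)^4$ with the binomial coefficients $1,4,6,4,1$ and again factor each cross term under independence, which directly gives the stated expression for $\mu_{Z_1}^{(2)}$.

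It then remains only to supply the individual raw moments $\mu_{\Hd1}^{(p)}$ and $\mu_{\Hr1}^{(p)}$ for $p \in \{1,2,3,4\}$. Since $\Hd1 \sim \mathrm{Nakagami}(m_{h_1}, \ell_{h_1})$ after the pathloss scaling, I would invoke the standard Nakagami-$m$ moment formula $\mathbb{E}\{X^p\} = \frac{\Gamma(m + p/2)}{\Gamma(m)}(\Omega/m)^{p/2}$ with $m = m_{h_1}$ and spread $\Omega = \ell_{h_1}$, which reproduces the first displayed moment. For the reflection link, $\Hr1 \sim \Gamma(Nk_{S_1}, \sqrt{\ell_{\mathrm{BS}}\ell_{g_1}}\,\theta_{S_1})$ by combining \Cref{prop:1} with the Gamma scaling property, so I would substitute $k = Nk_{S_1}$ and $\theta = \sqrt{\ell_{\mathrm{BS}}\ell_{g_1}}\,\theta_{S_1}$ into the Gamma moment formula $\mathbb{E}\{X^p\} = \frac{\Gamma(k+p)}{\Gamma(k)}\theta^p$ to obtain the second displayed moment.

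There is no genuine obstacle: the whole argument reduces to two binomial expansions, the factorization of expectations of independent products, and the substitution of two textbook moment identities. The only point demanding care is correct parameter bookkeeping---identifying the Nakagami spread as $\Omega = \ell_{h_1}$ and the Gamma scale as the pathloss-scaled $\sqrt{\ell_{\mathrm{BS}}\ell_{g_1}}\,\theta_{S_1}$---so that the pathloss factors enter the moments with the right powers. It is worth emphasizing that independence, rather than mere uncorrelatedness, is what licenses factoring the third- and fourth-order cross terms $\mathbb{E}\{\Hd1^3\Hr1\}$ and $\mathbb{E}\{\Hd1\Hr1^3\}$, so I would state this assumption explicitly at the outset.
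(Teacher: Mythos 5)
Your proposal is correct and follows essentially the same route as the paper, which simply expands \eqref{eq:9} (binomially, to second and fourth order) and substitutes the standard Nakagami-$m$ and Gamma raw-moment formulas with the pathloss-scaled parameters $(m_{h_1}, \ell_{h_1})$ and $(Nk_{S_1}, \sqrt{\ell_{\mathrm{BS}}\ell_{g_1}}\,\theta_{S_1})$. Your added remark that full independence (not mere uncorrelatedness) is needed to factor the cross terms $\mathbb{E}\{\Hd1^3\Hr1\}$ and $\mathbb{E}\{\Hd1\Hr1^3\}$ is a useful clarification that the paper leaves implicit.
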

\begin{proof}
	Proof follows directly by expanding \eqref{eq:9} and substituting the moments of Nakagami and Gamma \acp{RV}.
\end{proof}
\begin{reply}Evaluating $\mu_{\Hr1}^{(2)}$, we find that it is given by $\mu_{\Hr1}^{(2)} =\ell_{\mathrm{BS}} \ell_{g_1} \theta_{S_1}^2 (N^2 k_{S_1}^2 + N k_{S_1})$. This shows a quadratic improvement of the mean received power of $Z_1$ with the number of \ac{IRS} elements $N$. However, notice the presence of the composite pathloss of the channel $\ell_{\mathrm{BS}} \ell_{g_1}$, meaning that in order for the promised gains to be realized, $N$ should be large enough such that the \ac{IRS} can overcome the pathloss\end{reply}. After scaling with $P_1$, the \ac{UE}1 receive power follows the distribution
\begin{align}
	Z_1P_1 \stackrel{\mathrm{approx}}{\sim}  \Gamma\big(k_1, P_1\theta_1\big)~,
\end{align}
where $k_1$ and $\theta_1$ are the Gamma \ac{RV} parameters matched to the moments in \Cref{lemma:2}.

\vspace{-4mm}
\subsection{Statistics of the Randomly Combined UE}
Assuming the channels of the users are uncorrelated, the combining will appear random for \ac{UE}2. In that case, the effective channel power of \ac{UE}2 is given by
\begin{align}\label{eq:12}
Z_2 &= \Big\lvert ~ \sqrt{\ell_{h_2}} h_2  + \sqrt{\ell_{\mathrm{BS}} \ell_{g_2} \vphantom{p_{h_2}} } \sum_{n = 1}^{N} e^{j\phi_n} \mathbf{h}_{\mathrm{BS}, n} \, \mathbf{g}_{2, n} ~ \Big \rvert^2~.
\end{align}
Compared to \eqref{eq:6}, the sum term consists of out-of-phase complex-valued coefficients. Similarly to the previous subsection, we attempt to fit the sum by a simple distribution; namely, a complex-Gaussian through the conventional \ac{CLT}.
\begin{lemma}\label{lemma:3}
Let $S_2 = \sum_{n = 1}^{N} e^{j\phi_n} \mathbf{h}_{\mathrm{BS}, n} \, \mathbf{g}_{2, n}$, then the distribution of $S_2$ can be approximated as
\begin{align*}
S_2 \stackrel{\mathrm{approx}}{\sim} \mathcal{CN}(0, N)~.
\end{align*}
\end{lemma}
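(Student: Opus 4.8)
The plan is to invoke the conventional (central) form of the \ac{CLT}. Since $S_2$ is a sum of $N$ independent and identically distributed complex-valued summands $X_n = e^{j\phi_n}\mathbf{h}_{\mathrm{BS}, n}\,\mathbf{g}_{2, n}$, it suffices to show that each $X_n$ is zero-mean, has finite variance, and is circularly symmetric; the sum is then approximated for large $N$ by a zero-mean circularly-symmetric complex Gaussian whose variance is the sum of the per-term variances. The work therefore reduces to checking these three properties for a single summand and identifying the per-term second moment.

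First I would examine the effective phase of $X_n$. Substituting the phase-shift $\phi_n = \arg(h_1) - \arg(\mathbf{h}_{\mathrm{BS}, n}\,\mathbf{g}_{1, n})$ chosen to boost \ac{UE}1, the argument of $\mathbf{h}_{\mathrm{BS}, n}$ cancels and the phase of $X_n$ collapses to $\psi_n = \arg(h_1) - \arg(\mathbf{g}_{1, n}) + \arg(\mathbf{g}_{2, n})$, while its magnitude remains $|\mathbf{h}_{\mathrm{BS}, n}|\,|\mathbf{g}_{2, n}|$. Under the assumption that the \ac{UE}s' channels are uncorrelated, and using the standard fact that the phase of a complex fading coefficient is uniform on $[0, 2\pi)$ and independent of its magnitude, the phase of $\mathbf{g}_{2, n}$ renders $\psi_n$ uniform and independent of the magnitude factor. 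Hence $\mathbb{E}\{X_n\} = \mathbb{E}\{|\mathbf{h}_{\mathrm{BS}, n}|\,|\mathbf{g}_{2, n}|\}\,\mathbb{E}\{e^{j\psi_n}\} = 0$, and likewise the pseudo-variance obeys $\mathbb{E}\{X_n^2\} \propto \mathbb{E}\{e^{j2\psi_n}\} = 0$, which gives circular symmetry. The second moment then follows from independence and the unit-power normalization, $\mathbb{E}\{|X_n|^2\} = \mathbb{E}\{|\mathbf{h}_{\mathrm{BS}, n}|^2\}\,\mathbb{E}\{|\mathbf{g}_{2, n}|^2\} = 1$.

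With the summands independent across $n$ (inherited from the i.i.d. fading across the \ac{IRS} elements and the element-wise structure of the phase-shifts), zero-mean, circularly symmetric, and each of unit variance, the complex \ac{CLT} yields $S_2 \stackrel{\mathrm{approx}}{\sim} \mathcal{CN}(0, N)$, where the variance $N$ is the sum of the $N$ unit per-term variances. The hard part is not the CLT machinery but the justification of the zero-mean and circular-symmetry properties: everything hinges on the claim that the phases $\phi_n$, optimized for \ac{UE}1, act as effectively random, uniformly-distributed rotations from the perspective of \ac{UE}2. This decoupling is precisely where the uncorrelated-channels assumption must be stated with care; once it is granted, the remaining moment computations and the appeal to the CLT are routine.
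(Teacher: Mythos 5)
Your proof is correct and takes essentially the same route as the paper: the paper's own proof simply applies the conventional CLT to the sum of complex unit-power i.i.d.\ summands (citing prior work for robustness even when the real and imaginary parts are correlated), exactly as you do. Your explicit verification of the zero mean and circular symmetry---via the cancellation of $\arg(\mathbf{h}_{\mathrm{BS},n})$ in $\phi_n + \arg(\mathbf{h}_{\mathrm{BS},n}\,\mathbf{g}_{2,n})$ and the uniformity of the residual phase under the uncorrelated-channels assumption---is a detail the paper leaves implicit, not a different method.
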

\begin{proof}
\begin{reply}Proof follows by application of the \ac{CLT} on the sum of complex unit-power i.i.d. \acp{RV}. As shown in \cite{Ding20b}, even under correlation of the real and imaginary parts, the \ac{CLT} still provides a good approximation.\end{reply}
\end{proof}

To see how good such an approximation is, we compare the magnitude of the sum with a Rayleigh fit (magnitude of Gaussian). This is shown in \Cref{fig:f2} for $N = 4$. We see that it does provide a good fit; however, it is not as good (at the tails) compared to the Gamma approximation in the case before.

\begin{figure}[t]
	\centering
	\resizebox{0.855\linewidth}{!}{%
		\pgfplotsset{width=240pt, height=120, compat = 1.9}
		\begin{tikzpicture}
	\begin{semilogyaxis}[
	 	xlabel={$|S_2|$},
		ylabel={Density},
		label style={font=\labelfont},
		ylabel shift = -1mm,	
		ymin=1e-6, ymax=1,
		xmin=0, xmax=8,
		xtick={0,1,2,3,4,5,6,7,8},
		ytick={1e-6,1e-4,1e-2,1e0},	
		ticklabel style = {font=\tickfont},
		ymajorgrids=true,
		xmajorgrids=true,
		major x grid style={solid, cgrid},
		major y grid style={solid, cgrid},
		legend style={font=\legendfontt, at={(0.5,0.05)},anchor=south},
		legend cell align=left,
	]
	\addplot+[color=c1, mark=none, line width=2pt] table [x=x1, y=y1, col sep=comma] {graphics/results/Fig2.csv};
		\addlegendentry{Empirical}
	\addplot+[color=c2, mark=none, line width=1.25pt, dashed, dash phase=0, dash pattern=on 5pt off 5pt] table [x=x2, y=y2, col sep=comma] {graphics/results/Fig2.csv};
	 \addlegendentry{Rayleigh approx.}
	\end{semilogyaxis}
\end{tikzpicture}
	}
	\vspace{-1mm}
	\caption{Density of $|S_2|$ for $N = 4$, $m_{\mathrm{BS}} = 3$, $m_{g_2} = 1$, and $\phi_n$ being uniformly distributed.}
	\label{fig:f2}
	\vspace{-4mm}
\end{figure}
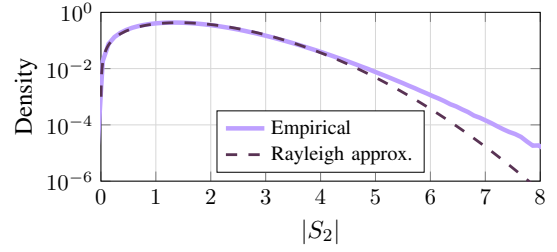

We proceed next in a similar fashion as in the coherent combing case. Let $H_{2, \mathrm{d}} = \sqrt{\ell_{h_2}}  h_2 $ and $H_{2, \mathrm{r}} =\sqrt{\ell_{\mathrm{BS}} \ell_{g_2}} S_2$ be the pathloss-scaled fading coefficients of the direct and reflection links, the channel power of \ac{UE}2 can be written as
\begin{align}\label{eq:13}
Z_2 &= | H_{2, \mathrm{d}} + H_{2, \mathrm{r}} |^2~.
\end{align}
The distribution of the magnitudes is given by
\begin{align}
|H_{2, \mathrm{d}}| &\sim \mathrm{Nakagami}\Big(m_{h_2}, \ell_{h_2}\Big)~, \\ 
|H_{2, \mathrm{r}}| &\sim \mathrm{Nakagami}\Big(1, N \ell_{\mathrm{BS}} \ell_{g_2}\Big)~,
\end{align}
where the fact that Nakagami becomes Rayleigh for $m = 1$ has been applied here to unify notation. Now, we apply the Gamma approximation of the power for $Z_2$. The first and second moments under independence are given by the following lemma.
\begin{lemma}\label{lemma:4}
	The first two moments of \ac{UE}2 channel power (under random combining) are given by
	\begin{align*}
	\mu_{Z_2}^{\vphantom{(2)}} &= \mu_{|\Hd2|}^{(2)} + \mu_{|\Hr2|}^{(2)}~, \\
	\mu_{Z_2}^{(2)} &= \mu_{|\Hd2|}^{(4)} + \mu_{|\Hr2|}^{(4)} +  4\,\mu_{|\Hd2|}^{(2)}\,\mu_{|\Hr2|}^{(2)}~,
	\end{align*}
	where
	\begin{align*}
	\mu_{|\Hd2|}^{(p)}  &= \frac{\Gamma(m_{h_2} + \frac{p}{2})}{\Gamma(m_{h_2}) {(m_{h_2} / \ell_{h_2})}^{p/2}}~, \\  
	\mu_{|\Hr2|}^{(p)}  &= \Gamma\Big(1 + \frac{p}{2}\Big) {(N \ell_{\mathrm{BS}} \ell_{g_2})}^{p/2} ~.
	\end{align*}
\end{lemma}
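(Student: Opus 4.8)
The plan is to expand $Z_2 = |\Hd2 + \Hr2|^2$ and to exploit the circular symmetry of the reflection term. Writing $A = \Hd2$ and $B = \Hr2$, which are independent, I would start from
\begin{align*}
Z_2 = |A|^2 + |B|^2 + 2\,\mathrm{Re}(A B^*)~.
\end{align*}
The central structural input is \Cref{lemma:3}: since $S_2$, and hence $B = \sqrt{\ell_{\mathrm{BS}}\ell_{g_2}}\,S_2$, is approximately $\mathcal{CN}(0,N)$, it is circularly symmetric, so its phase is uniform on $[0,2\pi)$ and independent of its magnitude. Writing $A = |A|e^{j\alpha}$ and $B = |B|e^{j\beta}$, the cross term becomes $2|A||B|\cos(\alpha-\beta)$, and conditioning on $(|A|,|B|,\alpha)$ while averaging over the uniform $\beta$ is the workhorse of the whole argument; note that this step does not require any knowledge of the phase distribution of $A$.

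For the first moment I would observe that $\mathbb{E}\{B\} = 0$ together with independence gives $\mathbb{E}\{\mathrm{Re}(AB^*)\} = \mathrm{Re}\big(\mathbb{E}\{A\}\mathbb{E}\{B^*\}\big) = 0$, so the cross term drops and $\mu_{Z_2} = \mathbb{E}\{|A|^2\} + \mathbb{E}\{|B|^2\} = \mu_{|\Hd2|}^{(2)} + \mu_{|\Hr2|}^{(2)}$. This is precisely where the random-combining case departs from the coherent case of \Cref{lemma:2}: there the in-phase alignment left a surviving cross term $2\mu_{\Hd1}\mu_{\Hr1}$, whereas here the uniform phase annihilates it. For the second moment I would square the expansion, grouping $Z_2^2 = (u+v+w)^2$ with $u = |A|^2$, $v = |B|^2$, $w = 2\,\mathrm{Re}(AB^*)$, and take expectations term by term. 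Independence yields $\mathbb{E}\{u^2\} = \mu_{|\Hd2|}^{(4)}$, $\mathbb{E}\{v^2\} = \mu_{|\Hr2|}^{(4)}$, and $\mathbb{E}\{uv\} = \mu_{|\Hd2|}^{(2)}\mu_{|\Hr2|}^{(2)}$. The mixed terms $\mathbb{E}\{uw\}$ and $\mathbb{E}\{vw\}$ are proportional to $\mathbb{E}_\beta\{\cos(\alpha-\beta)\} = 0$ and hence vanish, while $\mathbb{E}\{w^2\} = 4\,\mathbb{E}\{|A|^2|B|^2\cos^2(\alpha-\beta)\}$ uses $\mathbb{E}_\beta\{\cos^2(\alpha-\beta)\} = 1/2$ to produce $2\,\mu_{|\Hd2|}^{(2)}\mu_{|\Hr2|}^{(2)}$. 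Collecting the $2\mathbb{E}\{uv\}$ contribution with this term gives the claimed coefficient $4$ in front of $\mu_{|\Hd2|}^{(2)}\mu_{|\Hr2|}^{(2)}$.

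Finally, the explicit formulas follow by substituting the standard Nakagami-$m$ raw moment $\mathbb{E}\{|X|^p\} = \frac{\Gamma(m+p/2)}{\Gamma(m)}(\Omega/m)^{p/2}$, with $(m,\Omega) = (m_{h_2},\ell_{h_2})$ for the direct link and $(m,\Omega) = (1, N\ell_{\mathrm{BS}}\ell_{g_2})$ for the reflection link, the latter simplifying through $\Gamma(1) = 1$. The calculation itself is routine arithmetic; the only genuine care is the phase-averaging bookkeeping, i.e.\ correctly identifying which of the six second-moment terms are odd in the phase and therefore vanish, which I would expect to be the step where a careless version of the proof slips.
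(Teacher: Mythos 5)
Your proof is correct, and its skeleton matches the paper's: expand $Z_2 = |\Hd2 + \Hr2|^2$, argue that the cross terms vanish by symmetry, evaluate the surviving square of the cross term with a factor $\tfrac12$, and substitute the Nakagami raw moments (the paper's proof of the first moment is simply declared ``trivial''; your $\mathbb{E}\{\Hr2\}=0$ argument is the intended one). The difference lies in how the symmetry step is executed, and yours is slightly more economical in assumptions. The paper works in Cartesian coordinates, writing $\Re\{\Hd2^*\Hr2\} = \Re\{\Hd2\}\Re\{\Hr2\} + \Im\{\Hd2\}\Im\{\Hr2\}$, and then invokes ``zero-mean symmetry'' and an equal power split across real and imaginary parts of \emph{both} the direct and reflection terms in order to kill the odd terms and to get $\mathbb{E}\big\{\Re\{\Hd2^*\Hr2\}^2\big\} = \tfrac12\,\mu_{|\Hd2|}^{(2)}\mu_{|\Hr2|}^{(2)}$. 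You instead work in polar form and condition on $(|\Hd2|,|\Hr2|,\alpha)$ while averaging over the phase $\beta$ of $\Hr2$, using only the circular symmetry of $S_2$ provided by \Cref{lemma:3} (uniform phase independent of magnitude). That route requires no assumption at all on the phase distribution of the direct link $h_2$ --- which the system model never actually specifies, since Nakagami fading only fixes the magnitude law --- whereas the paper's I/Q bookkeeping implicitly assumes the direct link's phase is also symmetric and balanced. Both executions produce the same coefficient accounting, $2\,\mathbb{E}\{uv\} + \mathbb{E}\{w^2\} = 4\,\mu_{|\Hd2|}^{(2)}\mu_{|\Hr2|}^{(2)}$, and your substitution of $(m,\Omega) = (m_{h_2},\ell_{h_2})$ and $(1, N\ell_{\mathrm{BS}}\ell_{g_2})$ reproduces the stated closed forms, so the two proofs are interchangeable; yours is marginally cleaner as a statement about what the result actually depends on, and it also correctly isolates the contrast with the coherent case of \Cref{lemma:2}, where the nonzero mean of the aligned sum is precisely what leaves the cross term alive.
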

\begin{proof}
	\begin{reply}For $\mu_{Z_2}^{\vphantom{(2)}}$, the proof is trivial. As for $\mu_{Z_2}^{(2)}$, we have
	\begin{align*}
		\begin{split}
			\mu_{Z_2}^{(2)} = \,& \mathbb{E}\big\{ |\Hd2|^4 + |\Hr2|^4 + 2|\Hd2|^2|\Hr2|^2  \\
			& + 4|\Hd2|^2\Re\{\Hd2^*\Hr2\} + 4|\Hr2|^2\Re\{\Hd2^*\Hr2\} \\
			& + 4 \Re\{\Hd2^*\Hr2\}^2 \big\}\,.
		\end{split}	
	\end{align*}	
	Since  $\Re\{\Hd2^*\Hr2\} = \Re\{\Hd2\}\Re\{\Hr2\} + \Im\{\Hd2\}\Im\{\Hr2\}$, then under the assumptions of independence and zero-mean symmetry, we get $\mathbb{E}\big\{ \Re\{\Hd2^*\Hr2\}\big\} = 0$.
	Assuming the power is equal across the real and imaginary parts, we have
	\begin{align*}
	\begin{split}
		\mathbb{E}\big\{ \Re\{\Hd2^*\Hr2\}^2\big\} =\,& \mathbb{E}\big\{\Re\{\Hd2\}^2\big\} \mathbb{E}\big\{\Re\{\Hr2\}^2\big\} \\ &~~+ \mathbb{E}\big\{\Im\{\Hd2\}^2\big\} \mathbb{E}\big\{\Im\{\Hr2\}^2\big\} \\
		=\,& \mathbb{E}\big\{ |\Hd2|^2\big\} \mathbb{E}\big\{ |\Hr2|^2\big\} / 2\,,
	\end{split}
	\end{align*}
	where the last equation follows from $\mathbb{E}\big\{\Re\{\Hd2\}^2\big\} = \mathbb{E}\big\{\Im\{\Hd2\}^2\big\} = \mathbb{E}\big\{ |\Hd2|^2\big\} / 2$, and similarly for the reflection term. Collecting the terms, and substituting the moments of Nakagami \acp{RV}, we get the final results.\end{reply}
%	Expand \eqref{eq:13} in terms of the complex-conjugate, and then, under the assumption that the phase distribution of the sum term is zero-mean symmetric, all terms involving odd-order moments of $S_2$ will be equal to zero, and thus are not needed. Finally, we substitute the moments of Nakagami \acp{RV}.
\end{proof}
\begin{reply}Evaluating $\mu_{|\Hr2|}^{(2)}$, it is given by $N \ell_{\mathrm{BS}} \ell_{g_2}$. This shows a linear increase of the mean receive power of $Z_2$ with $N$, in contrast to the quadratic increase under the coherent combining case\end{reply}. After scaling with $P_2$, the \ac{UE}2 receive power follows the distribution
\begin{align}
Z_2P_2 \stackrel{\mathrm{approx}}{\sim}  \Gamma\big(k_2, P_2\theta_2\big)~,
\end{align}
where $k_2$ and $\theta_2$ are the Gamma parameters matched to the moments in \Cref{lemma:4}.

\vspace{-1mm}
\section{Interference Cancellation Outage Analysis}
In the following, we calculate the outage probability for the uplink \ac{IRS}-\ac{NOMA} system under \ac{IC}. First, we evaluate the outage probability without \ac{IC}. 
\begin{prop}\label{prop:2}
	
	Let $Z_i P_i\sim  \Gamma(k_i, P_i\theta_i)$ be the received power of the $i^{\textrm{th}}$-\ac{UE}, $Z_j P_j\sim  \Gamma(k_j, P_j\theta_j)$ the received power of the $j^{\textrm{th}}$-\ac{UE}, with $P_w$ being the noise power. \begin{reply}The \ac{IRS}-\ac{NOMA} outage probability without \ac{IC} is given by
	\begin{align*}
	\begin{split}
	p_{\mathrm{out}}^{(i)}  \approx & ~ I\bigg(\frac{\epsilon \hat{\theta}_j}{\hat{\theta}_i + \epsilon \hat{\theta}_j};\,\hat{k}_i,\,\hat{k}_j \bigg)~,
	\end{split}
\end{align*}
where $I(.;.,.)$ is the regularized incomplete beta function\end{reply}, and
\begin{align*}
	\hat{k}_i = k_i~, \quad \quad &\hat{\theta}_i = \theta_iP_i~,\\
	\hat{k}_j = \frac{\big(k_j\theta_jP_j + P_w\big)^2}{k_j (\theta_jP_j)^2}~, \quad \quad &\hat{\theta}_j = \frac{k_j (\theta_jP_j)^2}{k_j\theta_jP_j + P_w}~.
\end{align*}
\end{prop}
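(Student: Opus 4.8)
The plan is to rewrite the outage event of \eqref{eq:15} as a comparison between two independent Gamma random variables and then exploit the fact that a suitably normalized ratio of independent Gamma variables follows a Beta distribution. Writing the event as $Z_iP_i \le \epsilon\,(Z_jP_j + P_w)$, the numerator $Z_iP_i \sim \Gamma(k_i,P_i\theta_i)$ is already Gamma, so I immediately take $\hat{k}_i = k_i$ and $\hat{\theta}_i = \theta_iP_i$. The one genuine obstacle is the denominator: the deterministic noise floor $P_w$ shifts the Gamma-distributed $Z_jP_j$, and a shifted Gamma is no longer Gamma, so no exact closed form for its \ac{CDF} is available.

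The first step is therefore to reapply the moments matching of \Cref{lemma:1} to the shifted quantity $Z_jP_j + P_w$. Because adding a constant leaves the variance unchanged, its first moment is $k_j\theta_jP_j + P_w$ and its variance is $k_j(\theta_jP_j)^2$; substituting these into the formulas of \Cref{lemma:1} (recalling that the denominator of $k$ and the numerator of $\theta$ there are exactly the variance) reproduces the stated $\hat{k}_j$ and $\hat{\theta}_j$, giving $Z_jP_j + P_w \stackrel{\mathrm{approx}}{\sim} \Gamma(\hat{k}_j,\hat{\theta}_j)$.

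The second step treats the resulting ratio of the independent Gamma variables $A \sim \Gamma(\hat{k}_i,\hat{\theta}_i)$ and $B \sim \Gamma(\hat{k}_j,\hat{\theta}_j)$. I would factor out the scales by writing $A = \hat{\theta}_iA'$ and $B = \hat{\theta}_jB'$ with $A'\sim\Gamma(\hat{k}_i,1)$ and $B'\sim\Gamma(\hat{k}_j,1)$, so that the event $A \le \epsilon B$ becomes $A'/(A'+B') \le \epsilon\hat{\theta}_j/(\hat{\theta}_i+\epsilon\hat{\theta}_j)$ after the monotone rearrangement $A'/B'\le c \Leftrightarrow A'/(A'+B') \le c/(1+c)$. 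Invoking the standard result that, for independent unit-scale Gamma variables, $A'/(A'+B') \sim \mathrm{Beta}(\hat{k}_i,\hat{k}_j)$, whose \ac{CDF} is the regularized incomplete beta function $I(\cdot;\hat{k}_i,\hat{k}_j)$, evaluating this \ac{CDF} at the threshold yields the claimed expression. The only approximation introduced here is the Gamma fit of the shifted denominator in the first step; once that is granted, and assuming $Z_i$ and $Z_j$ are independent, the Beta-ratio identity is exact, so verifying that the matched parameters coincide with the stated $\hat{k}_j,\hat{\theta}_j$ is where the main care is needed.
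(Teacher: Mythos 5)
Your proposal is correct and follows essentially the same route as the paper: moment-match the interference-plus-noise term $Z_jP_j + P_w$ to a Gamma \ac{RV} (using that the constant shift changes only the mean, not the variance, which yields exactly $\hat{k}_j$ and $\hat{\theta}_j$), then apply the ratio-of-independent-Gammas result to get the regularized incomplete beta expression. The only cosmetic difference is that you derive the ratio \ac{CDF} from the elementary fact $A'/(A'+B') \sim \mathrm{Beta}(\hat{k}_i,\hat{k}_j)$, whereas the paper cites the equivalent beta prime distribution of $A/B$ directly.
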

\begin{proof}
	\begin{reply}Let $X \sim \Gamma(k_X, \theta_X)$ and $Y \sim \Gamma(k_Y, \theta_Y)$ be two independent Gamma \acp{RV}, then their ratio $R = X / Y$ is known to be beta prime distributed, i.e.,
	\begin{align*}
		R \sim \beta^{'}\big(k_X, k_Y, 1, \theta_X / \theta_Y\big)~,
	\end{align*}
	with its \ac{CDF} given by 
	\begin{align*}
		\mathbb{P}\big\{R \leq \epsilon \big\} = I\bigg(\frac{\epsilon {\theta}_Y}{{\theta}_X + \epsilon {\theta}_Y};\,{k}_X,\,{k}_Y \bigg)~.
	\end{align*}\end{reply}
	However, the denominator in \eqref{eq:15} is not Gamma, due to the presence of the noise term. Therefore, we approximate the interference-plus-noise term by an equivalent Gamma \ac{RV}, again, via moments matching. By doing so, and using the Gamma scaling property, we are arrive at the final results.
\end{proof}

The detection scheme we consider here is parallel, in the sense that \ac{UE}1, \ac{UE}2, or both can be detected correctly at the first iteration and removed from the received signal. Whatever remains can be detected in the second iteration after \ac{IC}. Such formulation allows us to assume an arbitrary cancellation order and save  us the hassle of ordered statistics as would be required under successive \ac{IC}. This is formulated in the following proposition.

\begin{prop}\label{prop:3}
	 The \ac{IRS}-\ac{NOMA} outage probability of the $i^{\textrm{th}}$-\ac{UE} under \ac{IC} is given by
	\begin{align*}
		p_{\mathrm{out,\,IC}}^{(i)} \approx 1 - \min \big(p_{\mathrm{succ}}^{(i)} + p_{\mathrm{succ}}^{(j)} \, p_{\mathrm{succ,\,SNR}}^{(i)} \,,\, p_{\mathrm{succ,\,SNR}}^{(i)} \big)~,
	\end{align*}
%	where $p_{\mathrm{succ}}^{(i)} = 1 - p_{\mathrm{out}}^{(i)}$ and $p_{\mathrm{succ,\,SNR}}^{(i)} = 1 - p_{\mathrm{out,\,SNR}}^{(i)}$ are the success probabilities, with $p_{\mathrm{out,\,SNR}}^{(i)}$ as defined in \eqref{eq:16} is the Gamma \ac{CDF} evaluated at $\epsilon P_w / P_i$.
	where $p_{\mathrm{succ}}^{(i)} = 1 - p_{\mathrm{out}}^{(i)}$ and $p_{\mathrm{succ,\,SNR}}^{(i)} = 1 - p_{\mathrm{out,\,SNR}}^{(i)}$ are the success probabilities, \begin{reply}with $p_{\mathrm{out,\,SNR}}^{(i)}$ as defined in \eqref{eq:16} is the Gamma \ac{CDF} given by
	\begin{align*}
		p_{\mathrm{out,\,SNR}}^{(i)} = \gamma(k_i,\, \epsilon P_w / (\theta_iP_i))~,
	\end{align*}
	with $\gamma(.,\,.)$ being the regularized incomplete Gamma function.\end{reply}
\end{prop}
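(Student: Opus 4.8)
The plan is to read off the detection events that constitute a successful decoding of \ac{UE}~$i$ under the parallel two-iteration scheme, and then to translate the logical structure of those events into a probability expression.

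First I would fix the random channel powers $Z_i, Z_j$ and introduce three events: $A_i = \{Z_iP_i/(Z_jP_j + P_w) > \epsilon\}$, that \ac{UE}~$i$ is decoded in the first (\ac{SINR}) iteration; $A_j$, the analogous event for \ac{UE}~$j$; and $B_i = \{Z_iP_i/P_w > \epsilon\}$, that \ac{UE}~$i$ clears the interference-free (\ac{SNR}) threshold. By construction $\mathbb{P}\{A_i\} = p_{\mathrm{succ}}^{(i)}$, $\mathbb{P}\{A_j\} = p_{\mathrm{succ}}^{(j)}$, and $\mathbb{P}\{B_i\} = p_{\mathrm{succ,\,SNR}}^{(i)}$. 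Under the parallel scheme, \ac{UE}~$i$ is delivered either directly in the first iteration (event $A_i$), or else \ac{UE}~$j$ is decoded and cancelled in the first iteration and \ac{UE}~$i$ is subsequently recovered from the cleaned signal (event $A_j \cap B_i$). Hence the overall success event is $E_i = A_i \cup (A_j \cap B_i)$.

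The key structural observation is that the \ac{SINR} never exceeds the \ac{SNR}, so $A_i \subseteq B_i$; consequently $E_i = (A_i \cup A_j) \cap B_i \subseteq B_i$, which yields the hard ceiling $\mathbb{P}\{E_i\} \leq \mathbb{P}\{B_i\} = p_{\mathrm{succ,\,SNR}}^{(i)}$. This is exactly why the final expression is clipped by the $\min$ against $p_{\mathrm{succ,\,SNR}}^{(i)}$: no cancellation strategy can beat the interference-free case. For the additive term I would treat the two success paths as contributing $\mathbb{P}\{A_i\}$ and $\mathbb{P}\{A_j \cap B_i\}$, approximate the latter by $\mathbb{P}\{A_j\}\,\mathbb{P}\{B_i\} = p_{\mathrm{succ}}^{(j)}\,p_{\mathrm{succ,\,SNR}}^{(i)}$, and drop the inclusion--exclusion overlap $\mathbb{P}\{A_i \cap A_j\}$ (using $A_i \subseteq B_i$). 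Summing produces the first argument of the $\min$; the success probability is then the minimum of this sum and the ceiling, and the outage follows as its complement, $p_{\mathrm{out,\,IC}}^{(i)} = 1 - p_{\mathrm{succ,\,IC}}^{(i)}$. Finally, $p_{\mathrm{out,\,SNR}}^{(i)}$ is simply the \ac{CDF} of the Gamma variable $Z_iP_i \sim \Gamma(k_i, P_i\theta_i)$ evaluated at $\epsilon P_w$, i.e.\ the regularized incomplete Gamma $\gamma(k_i, \epsilon P_w/(\theta_iP_i))$, as claimed.

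The main obstacle is justifying the independence and overlap simplifications: both $A_j$ and $B_i$ depend on $Z_i$ (through the \ac{SINR} of \ac{UE}~$j$ and the \ac{SNR} of \ac{UE}~$i$), and $Z_i, Z_j$ are themselves coupled through the shared \ac{BS}--\ac{IRS} channel, so factoring $\mathbb{P}\{A_j \cap B_i\}$ and discarding $\mathbb{P}\{A_i \cap A_j\}$ are the genuinely lossy steps. The saving grace is that these simplifications tend to inflate the estimated success probability, so the raw sum may exceed the exact feasibility ceiling $p_{\mathrm{succ,\,SNR}}^{(i)}$; the $\min$ restores a valid, well-behaved expression and is precisely what makes the final formula defensible despite the crude factorization.
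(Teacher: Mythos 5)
Your proposal is correct and follows essentially the same route as the paper: the same two-path success decomposition (first-iteration success, or the other \ac{UE} succeeds and \ac{UE}~$i$ then clears the noise-only threshold), the same independence-style factorization acknowledged as the lossy step, and the same clipping of the resulting sum by the interference-free success probability via the $\min$. Your explicit event-algebra formulation (noting $A_i \subseteq B_i$ so that the $\min$ is a genuine feasibility ceiling) is a slightly sharper statement of the justification the paper gives only in words, but it is the same argument.
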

\begin{proof}
	There are two paths for a successful detection of the $i^{\textrm{th}}$-\ac{UE}: it is detected correctly in the first iteration; or, it is not, but the other \ac{UE} is detected correctly, and after \ac{IC}, the $i^{\textrm{th}}$-\ac{UE} is detected interference-free in the presence of noise only. Following those events, we can approximate the success probability under \ac{IC} as $p_{\mathrm{succ,\,IC}}^{(i)} \approx p_{\mathrm{succ}}^{(i)} + p_{\mathrm{succ}}^{(j)} \, p_{\mathrm{succ,\,SNR}}^{(i)}$. The detection sequence just mentioned is not of fully independent events; hence, the approximation sign. To further improve the approximation, we use the fact that the performance cannot be better than that of the interference-free noise-only case. We get the final results by taking the minimum between this expression and the noise-only case.
\end{proof}

\vspace{-6mm}
\section{Analysis of an Example Scenario}
We assume a scenario where \ac{UE}1 is received at the \ac{BS} with $10$\,dB higher power than \ac{UE}2 through the direct links, and they are assisted by a $32$-elements \ac{IRS}. The strong user (\ac{UE}1) and the \ac{IRS} are assumed to have good \ac{LOS} to the \ac{BS}, while the weak user (\ac{UE}2) experiences close to Rayleigh fading. The two \ac{UE}s are assumed to have a moderate \ac{LOS} to the \ac{IRS}. This is set by adjusting the corresponding $m$ parameters. The scenario parameters are summarized in \Cref{table:1}. 
\begin{table}
	\begin{center}
		\fontsize{8pt}{10pt}\selectfont
		\begin{reply}\begin{tabularx}{0.82\linewidth}{l |l}
			\hline
			\textbf{Parameter} & \textbf{Value} \\ \specialrule{1pt}{0pt}{1pt}
			\#IRS elements & $N = 32$ \\
			Transmit powers & $P_1 = P_2 = P =$ $20$ or $35$\,dBm \\
			Nakagami parameters & \parbox[t]{5cm}{$m_{\mathrm{BS}} = 6$\\
				$m_{h_1} = 4$, $m_{h_2} = 1.1$\\
				$m_{g_1} = m_{g_2} = 2.25$\vspace{1mm}} \\ 
			Pathlosses & \parbox[t]{5cm}{$\ell_{\mathrm{BS}} = -60$\,dB	\\
				$\ell_{h_1} = -110$\,dB, $\ell_{h_2} = -120$\,dB\\
				$\ell_{g_1} = \ell_{g_2} = -60$\,dB\vspace{1mm}}\\
			Noise power & $P_w = -100$\,dBm \\
			\hline
		\end{tabularx}\end{reply}
	\end{center}
	\vspace{-2mm}
	\caption{Scenario parameters.}
	\label{table:1}
	\vspace{-5mm}
\end{table}

We investigate the outage under \ac{IC} (via \Cref{prop:3}) for the two strategies where the \ac{IRS} is either configured to boost \ac{UE}1, which is what our analysis was based on, or, the \ac{IRS} is configured to boost \ac{UE}2, which is obtained by switching indices in the analysis, i.e., \ac{UE}2 is coherently combined, while \ac{UE}1 is randomly combined. As a baseline, we show the performance of \ac{NOMA} without \ac{IRS} assistance. This can be easily obtained by setting $\ell_{\mathrm{BS}} = 0$ in the analysis. \Cref{fig:lowSNR} and \Cref{fig:highSNR} show the outage performance for transmit powers of $20$\,dBm and $35$\,dBm, respectively. This allows us to investigate the performance for both the low and high \ac{SNR} regimes. 

\begin{figure}[t]\label{fig:f3}
	\centering
	\resizebox{0.9\linewidth}{!}{%
		\pgfplotsset{width=260pt, height=195pt,compat = 1.9}
		\begin{tikzpicture}
\begin{semilogyaxis}[
xlabel={Outage threshold [dB]},
ylabel={Outage probability},
label style={font=\labelfont},
ylabel shift = -1mm,	
ymin=0.0001, ymax=1,
xmin=-15, xmax=25,
xtick={-15, -10, -5, 0, 5, 10, 15, 20, 25},
ticklabel style = {font=\tickfont},
ymajorgrids=true,
xmajorgrids=true,
yminorgrids=true,
xminorgrids=true,
major x grid style={solid, cgrid},
major y grid style={solid, cgrid},
minor x grid style={solid, cgrid},
minor y grid style={dotted, cgrid},
legend style={font=\legendfont, name=legendNode, at={(0.99, 0.256)}},
legend cell align=left,
%legend pos=south east,
]
\conf{{north west}}{0.025}{0.97};
% Solid
\foreach \i/\c/\v in {1/greenC1/1, 5/redC1/2, 9/blueC1/3}{
	\edef\temp{\noexpand \addplot [linew, color=\c, mark=none, forget plot] table [x=x\i, y=y\i, col sep=comma] {graphics/results/Fig3.csv};}
	\temp
};
% Dashed
\foreach \i/\c/\v in {2/greenC1/1, 6/redC1/2, 10/blueC1/3}{
	\edef\temp{\noexpand \addplot [linew, color=\c, mark=none, dashed, dash pattern=on 5pt off 5pt, forget plot] table [x=x\i, y=y\i, col sep=comma] {graphics/results/Fig3.csv};}
	\temp
};
% Error bars
\foreach \i/\c/\m/\v in {3/greenC1/square*/1, 4/greenC1/square*/1, 7/redC1/*/2, 8/redC1/*/2}{
	\edef\temp{\noexpand \addplot [error bars/.cd, y dir=both, y explicit, error mark options={errormarkSty}] [linew, mark=\m, only marks, marksz, color=\c, forget plot] table [x=x\i, y=y\i, y error minus=cl\i, y error plus=ch\i, col sep=comma] {graphics/results/Fig3.csv};}
	\temp
};
\foreach \i/\c/\m/\v in {11/blueC1/triangle*/3, 12/blueC1/triangle*/3}{
	\edef\temp{\noexpand \addplot [error bars/.cd, y dir=both, y explicit, error mark options={errormarkSty}] [linew, mark=\m, only marks, marksz2, color=\c, forget plot] table [x=x\i, y=y\i, y error minus=cl\i, y error plus=ch\i, col sep=comma] {graphics/results/Fig3.csv};}
	\temp
};
\addlegendimage{short Legend1, color=NOMA, line width=2pt};
\addlegendentry{NOMA only \hspace*{-4pt}};
\addlegendimage{short Legend2, color=NOMAIRS1, line width=2pt};
\addlegendentry{IRS-NOMA boost UE1 \hspace*{-4pt}};
\addlegendimage{short Legend3, color=NOMAIRS2, line width=2pt};
\addlegendentry{IRS-NOMA boost UE2 \hspace*{-4pt}};
\addplot[mark=none, color=black, line width=0.75pt] coordinates {(0,0) (0,1)};
\label{pgfr1}
\addplot[mark=none, color=black, dashed, dash pattern=on 5pt off 5pt, line width=0.75pt] coordinates {(0,0) (0,1)};
\label{pgfr2}
\addplot[mark=square, only marks, color=black, mark size=2pt] coordinates {(0,0) (0,2)};
\label{pgfr3}
\addplot[mark=o, only marks, color=black, mark size=2pt] coordinates {(0,0) (0,2)};
\label{pgfr4}
\addplot[mark=triangle, only marks, color=black, mark size=2.2pt, mark options={rotate=180}] coordinates {(0,0) (0,2)};
\label{pgfr5}
\end{semilogyaxis}
\node [draw, fill=white, above=2pt of legendNode.north east, anchor=south east](n1) {\shortstack[l]{
		\legendfont \ref*{pgfr1}  UE1, analysis \\    
		\legendfont \ref*{pgfr2}  UE2, analysis \\  
		\legendfont \,\ref*{pgfr3}~\ref*{pgfr4}~\ref*{pgfr5}~\,Simulation
}};
\end{tikzpicture}
	}
	\vspace{-2mm}
	\caption{Outage performance at low \ac{SNR} ($P = 20$\,dBm).}
	\label{fig:lowSNR}
	\vspace{-3mm}
\end{figure}
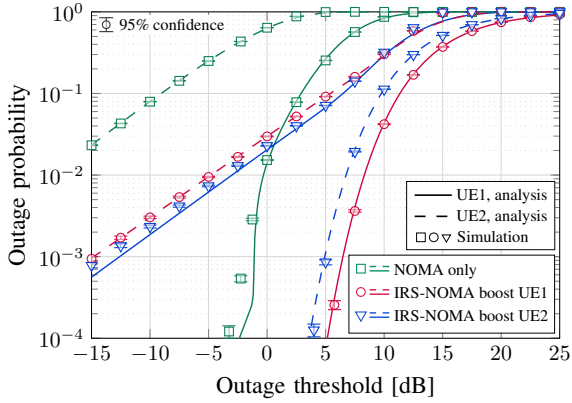

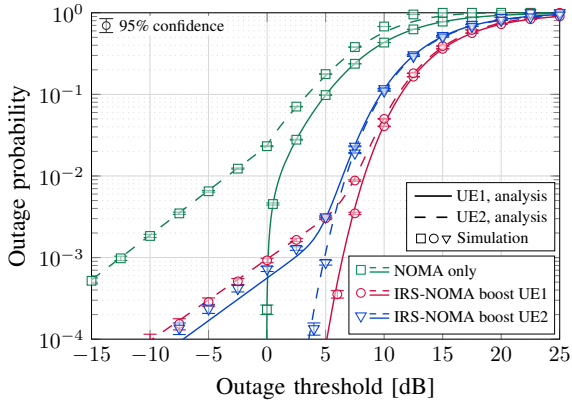
\begin{figure}[t]\label{fig:f4}
	\centering
	\resizebox{0.9\linewidth}{!}{%
		\pgfplotsset{width=260pt, height=195pt,compat = 1.9}
		\begin{tikzpicture}
\begin{semilogyaxis}[
xlabel={Outage threshold [dB]},
ylabel={Outage probability},
label style={font=\labelfont},
ylabel shift = -1mm,	
ymin=0.0001, ymax=1,
xmin=-15, xmax=25,
xtick={-15, -10, -5, 0, 5, 10, 15, 20, 25},
ticklabel style = {font=\tickfont},
ymajorgrids=true,
xmajorgrids=true,
yminorgrids=true,
xminorgrids=true,
major x grid style={solid, cgrid},
major y grid style={solid, cgrid},
minor x grid style={solid, cgrid},
minor y grid style={dotted, cgrid},
legend style={font=\legendfont, name=legendNode, at={(0.99, 0.256)}},
legend cell align=left,
%legend pos=south east,
]
\conf{{north west}}{0.025}{0.97};
% Solid
\foreach \i/\c/\v in {1/greenC1/1, 5/redC1/2, 9/blueC1/3}{
	\edef\temp{\noexpand \addplot [linew, color=\c, mark=none, forget plot] table [x=x\i, y=y\i, col sep=comma] {graphics/results/Fig4.csv};}
	\temp
};
% Dashed
\foreach \i/\c/\v in {2/greenC1/1, 6/redC1/2, 10/blueC1/3}{
	\edef\temp{\noexpand \addplot [linew, color=\c, mark=none, dashed, dash pattern=on 5pt off 5pt, forget plot] table [x=x\i, y=y\i, col sep=comma] {graphics/results/Fig4.csv};}
	\temp
};
% Error bars
\foreach \i/\c/\m/\v in {3/greenC1/square*/1, 4/greenC1/square*/1, 7/redC1/*/2, 8/redC1/*/2}{
	\edef\temp{\noexpand \addplot [error bars/.cd, y dir=both, y explicit, error mark options={errormarkSty}] [linew, mark=\m, only marks, marksz, color=\c, forget plot] table [x=x\i, y=y\i, y error minus=cl\i, y error plus=ch\i, col sep=comma] {graphics/results/Fig4.csv};}
	\temp
};
\foreach \i/\c/\m/\v in {11/blueC1/triangle*/3, 12/blueC1/triangle*/3}{
	\edef\temp{\noexpand \addplot [error bars/.cd, y dir=both, y explicit, error mark options={errormarkSty}] [linew, mark=\m, only marks, marksz2, color=\c, forget plot] table [x=x\i, y=y\i, y error minus=cl\i, y error plus=ch\i, col sep=comma] {graphics/results/Fig4.csv};}
	\temp
};
\addlegendimage{short Legend1, color=NOMA,line width=2pt};
\addlegendentry{NOMA only \hspace*{-4pt}};
\addlegendimage{short Legend2, color=NOMAIRS1,line width=2pt};
\addlegendentry{IRS-NOMA boost UE1 \hspace*{-4pt}};
\addlegendimage{short Legend3, color=NOMAIRS2,line width=2pt};
\addlegendentry{IRS-NOMA boost UE2 \hspace*{-4pt}};
\end{semilogyaxis}
\node [draw, fill=white, above=2pt of legendNode.north east, anchor=south east](n1) {\shortstack[l]{
		\legendfont \ref*{pgfr1}  UE1, analysis \\    
		\legendfont \ref*{pgfr2}  UE2, analysis \\  
		\legendfont \,\ref*{pgfr3}~\ref*{pgfr4}~\ref*{pgfr5}~\,Simulation
}};
\end{tikzpicture}
	}
	\vspace{-2mm}
	\caption{Outage performance at high \ac{SNR} ($P = 35$\,dBm).}
	\label{fig:highSNR}
	\vspace{-6mm}
\end{figure}

The first observation we make is that the weak \ac{NLOS}-dominated user (\ac{UE}2) always benefits from the deployment of the \ac{IRS}, whether the \ac{IRS} is configured to boost its power, or the other \ac{UE}. We see this trend at both low and high \ac{SNR}. For the other \ac{LOS}-dominated user (\ac{UE}1), the story is different. At low outage thresholds, and when the \ac{IRS} is configured to boost \ac{UE}2, the  performance of \ac{UE}1 is worse compared to the \ac{NOMA} only case, especially at low \ac{SNR}. This can be explained as follows. When the \ac{IRS} is configured to boost \ac{UE}2, the combining will appear random for \ac{UE}1, i.e., the fading amplitude will consist of a dominant \ac{LOS} plus a group of randomly combined amplitudes. Due to those additional added components, the \ac{LOS}-dominated fading is destroyed, and the probability that a \ac{UE} goes into a deeper fade becomes higher, and this is reflected in its outage probability. \begin{reply}We can infer this degradation from the expressions in our analysis; at low outage thresholds, the performance is dominated by noise, i.e., $	p_{\mathrm{out,\,IC}}^{(i)} \approx p_{\mathrm{out,\,SNR}}^{(i)}$, meaning that we only need to consider the statistics of $Z_2$. If we assume that only the direct path exists and it has a strong LOS ($m_{h_2} \rightarrow \infty$), then its moments are given by $\mu_{Z_2}^{\vphantom{(2)}} = \ell_{h_2}$ and $\mu_{Z_2}^{(2)} = \ell_{h_2}^2$. If we now perform the moments matching, then this results in a scale parameter $\theta_2 \rightarrow 0$, leading to a very fast decaying tail of the Gamma distribution, which corresponds to a steep slope of the outage curve. If we now add the reflection path, then the difference between $\mu_{Z_2}^{\vphantom{(2)}}$ and $\mu_{Z_2}^{(2)}$ will increase, leading to a higher $\theta_2$, i.e., more spread of the distribution. Since the terms are out-of-phase under random combining, then the spread will occur in both directions around the \ac{LOS} component and results in a wider tail, leading to a flatter outage slope. This indicates that it might not be a good idea to pair a \ac{UE} that has a strong \ac{LOS} to both the \ac{BS} and the \ac{IRS} with another \ac{UE}, while configuring the \ac{IRS} to boost that other \ac{UE}. Nevertheless, one can argue that for the \ac{LOS} \ac{UE}, it is received with high power at the \ac{BS}, and therefore it will probably try to operate at high rates, corresponding to high outage thresholds. In that case, and as can be seen in \Cref{fig:highSNR}, the outage performance of \ac{UE}1 is always better at sufficiently high outage thresholds. Alternatively, instead of configuring the entire \ac{IRS} to boost either of the \acp{UE}, one possibility is to split the surface between the two \acp{UE}, such that a certain performance is guaranteed for each \acp{UE}. This has the potential of reducing the degradation due to the \ac{LOS} conditions mentioned above.\end{reply}

%Final remarks: the expressions described in this paper contain division between Gamma functions. This can be problematic stability-wise when implemented. We suggest performing the calculations in the $\log$-domain using the $\log \Gamma(.)$ function. \\
%\textbf{Research reproducibility:} the code for generating the results in this paper can be downloaded here *TBD*.

\vspace{-2mm}
\section{Conclusions}
In this letter, we investigate the outage performance of an \ac{IRS}-assisted \ac{NOMA} uplink, where all users have direct and reflection links, and all links undergo Nakagami-$m$ fading. Using second-order moments matching, the received powers of the \ac{NOMA} users are approximated with Gamma \acp{RV}. This allows for flexible modeling of the propagation environment, while giving rise to tractable outage expressions under \ac{IC}. We analyzed an example scenario in which one of the \acp{UE} has a dominant \ac{LOS} connection to the \ac{BS}, while the other \ac{UE} has a dominant \ac{NLOS}, and made the following observations:
\begin{itemize}
	\item Presence of the \ac{IRS} always improves the performance of the \ac{NLOS} \ac{UE}, irrespectively of the \ac{IRS} configuration.
	\item At low outage thresholds, the presence of the \ac{IRS} might degrade the performance of the \ac{LOS}-dominated \ac{UE}, if the \ac{IRS} is configured to boost the other \ac{UE}. This is especially pronounced at low \ac{SNR}.
\end{itemize}
The accuracy of the analysis is verified by simulations.

\vspace*{-3mm}
\begin{acronym}[DSTTDSGRC]
\setlength{\itemsep}{-3pt}
\acro{CS}{compressed sensing}
\acro{ETF}{equiangular tight frame}
\acro{OGF}{orthoplectic Grassmannian frame}
\acro{NOMA}{non-orthogonal multiple access}
\acro{OMA}{orthogonal multiple access}
\acro{DFT}{discrete Fourier transform}
\acro{CDMA}{code-division multiple-access}
\acro{BCASC}{best complex antipodal spherical codes}
\acro{CBGC}{coherence-based Grassmannian codebook}
\acro{ICBP}{iterative collision-based packing}
\acro{MMSE}{minimum mean squared error}
\acro{MUSA}{multi-user shared access}
\acro{SIC}{successive interference cancellation}
\acro{SNR}{signal-to-noise ratio}
\acro{TDL-C}{tapped-delay-line-C}
\acro{LTE}{long-term evolution}
\acro{SINR}{signal-to-interference-plus-noise ratio}
\acro{SVD}{singular value decomposition}
\acro{KKT}{Karush-Kuhn-Tucker}
\acro{BLER}{block error ratio}
\acro{5G}{fifth-generation}
\acro{6G}{sixth-generation}
\acro{B5G}{beyond fifth-generation}
\acro{IoT}{internet-of-things}
\acro{PAPR}{peak-to-average-power ratio}
\acro{FFT}{fast-Fourier-transform}
\acro{IFFT}{inverse fast-Fourier-transform}
\acro{OFDM}{orthogonal frequency-division multiplexing}
\acro{BS}{base station}
\acro{UE}{user equipment}
\acro{MUD}{multiuser detection}
\acro{CWL}{codeword level}
\acro{MMSE}{minimum mean square error}
\acro{MF}{matched filter}
\acro{PIC}{parallel interference cancellation}
\acro{CRC}{cyclic-redundancy-check}
\acro{RB}{resource-block}
\acrodefplural{RB}{resource-blocks}
\acro{RMS}{root-mean-square}
\acro{DS}{delay spread}
\acro{LDPC}{low-density parity-check}
\acro{MIMO}{multiple-input multiple-output}
\acro{ITS}{intelligent transport systems}
\acro{V2X}{vehicle-to-everything}
\acro{V2V}{vehicle-to-vehicle}
\acro{V2I}{vehicle-to-infrastructure}
\acro{V2N}{vehicle-to-network}
\acro{V2P}{vehicle-to-pedestrian}
\acro{DSRC}{dedicated short-range communication}
\acro{C-V2X}{cellular-\ac{V2X}}
\acro{IEEE}{institute of electrical and electronics engineers}
\acro{MAC}{medium access control}
\acro{PHY}{physical}
\acro{CSMA}{carrier sense multiple access}
\acro{SB-SPS}{sensing-based semi-persistent scheduling}
\acro{5G-NR}{5th generation new-radio}
\acro{mMTC}{massive machine-type communication}
\acro{IGMA}{interleave-grid multiple access}
\acro{IDMA}{interleave-division multiple access}
\acro{ECDF}{empirical cumulative distribution function}
\acro{LLR}{log-likelhood-ratio}
\acro{IRS}{intelligent reflecting surface}
\acro{RIS}{reconfigurable intelligent surface}
\acro{LOS}{line-of-sight}
\acro{NLOS}{non-line-of-sight}
\acro{RV}{random variable}
\acro{CLT}{central limit theorem}
\acro{CDF}{cumulative distribution function}
\acro{IC}{interference cancellation}
\end{acronym}
\bibliographystyle{IEEEtran}
\bibliography{IEEEabrv,./References,./ExternalReferences,./LocalRefs}

\end{document}